\newcommand{\be}{\begin{eqnarray}}
\newcommand{\ee}{\end{eqnarray}}
\newcommand{\R}{\mathbb{R}}
\newcommand{\e}{\mathrm{e}}
\newtheorem{theorem}{Theorem}[section]
\newenvironment{proof}[1][Proof]{\noindent\textbf{#1.} }{\ \rule{0.5em}{0.5em}}
\def\d{{\rm d}}
\def\diag{{\rm diag}}
\def\<{\langle}
\def\>{\rangle}
\def\diag{{\rm diag}}
\def\ga{\gamma}
\newcommand{\beq}{\begin{equation}}
\newcommand{\eeq}{\end{equation}}
\newcommand{\bmat}{\begin{displaymath}}
\newcommand{\emat}{\end{displaymath}}
\def\1{{\bf 1}}
\begin{document}

\title{Spectral analysis of a selected non self-adjoint Hamiltonian in an infinite dimensional Hilbert space}
\author{
N. Bebiano\footnote{ CMUC, University of Coimbra, Department of
Mathematics, P 3001-454 Coimbra, Portugal (bebiano@mat.uc.pt)},
J.~da Provid\^encia\footnote{CFisUC, University of Coimbra,
Department of Physics, P 3004-516 Coimbra, Portugal
(providencia@teor.fis.uc.pt)}~ and J.P. da
Provid\^encia\footnote{Depatamento de F\'\i sica, Univ. of Beira
Interior, P-6201-001 Covilh\~a, Portugal
(joaodaprovidencia@daad-alumni.de)}}

\maketitle
\begin{abstract}
The so-called equation of motion method
is useful to obtain the explicit form of the eigenvectors and
eigenvalues of certain non self-adjoint bosonic Hamiltonians with
real eigenvalues.
These operators can be diagonalized when they are expressed in terms
of pseudo-bosons, which do not behave as ordinary bosons under the
adjoint transformation, but obey the Weil-Heisenberg commutation
relations.
\end{abstract}
\section{Introduction and preliminaries}
 In conventional formulations of
non-relativistic quantum mechanics,
the Hamiltonian operator is self-adjoint. However, certain
relativistic extensions of quantum mechanics lead to the
consideration of non self-adjoint Hamiltonian operators
with a real spectrum. This motivated an
intense research activity, both on the physical and mathematical
level (see, e.g.
\cite{bagarello,providencia,scholtz,[1],[2],[3],mostafa} and their
references).

 Throughout, we shall use synonymously the terms Hermitian and
self-adjoint. Denote  by $L^2(\R^2)$ the Hilbert space of square
integrable functions in two real variables, endowed with the
standard inner product $$\langle
f,g\rangle=\int_{-\infty}^{+\infty}\int_{-\infty}^{+\infty}
g(x,y)\overline{ f(x,y)} \d x\d y.$$ For
\begin{equation}\label{calD}{\cal D}=\left\{f(x,y)\e^{-(x^2+y^2)}:f(x,y)~\text{is a
polynomial
in}~x,y\right\}
\end{equation}which is a dense domain in $L^2(\R^2)$ \cite{bebiano},
let $a,b:{\cal D}\rightarrow{\cal D}$ be bosonic operators defined,
respectively, by
$$a=x+{1\over2}{\partial\over\partial x},\quad
b=y+{1\over2}{\partial\over\partial y}.$$ Consider also their
adjoints $a^*,b^*:{\cal D}\rightarrow{\cal D}$
$$a^*=x-{1\over2}{\partial\over\partial x}, \quad b^*=y-{1\over2}{\partial\over\partial y}.$$
We recall that, conventionally, $a,b$ are said to be {\it
annihilation operators}, while $a^*,b^*$ are  {\it creation
operators}. It is worth noticing that these operators are unbounded.
Moreover, $\cal D$ is stable under the action of $a,b$ and of their
adjoints, and they satisfy the commutation rules (CR's),
\begin{equation}\label{CR1}[a,a^*]=[b,b^*]=\1,\end{equation}where $\1$ is the identity operator in
$\cal D.$ (This means that $aa^*f-a^*af=bb^*f-b^*bf=f$ for any
$f\in{\cal D}.$) Furthermore,
\begin{equation}\label{CR2}[a,b^*]=[b,a^*]=[a^*,b^*]=[a,b]=0.\end{equation}
As it is well-known, the canonical commutation relations (\ref{CR1})
and (\ref{CR2}) characterize an algebra of Weil-Heisenberg (W-H).
Moreover, the following holds,
$$a\Phi_0=b\Phi_0=0,$$
for $\Phi_0=\e^{-(x^2+y^2)}\in {\cal D}$, a so-called {\it vacuum
state}.
The set of vectors
\begin{equation}\label{Phimn}\{\Phi_{m,n}=a^{*m}b^{*n}\Phi_0:~m,n\geq0\},\end{equation}constitutes a
basis of $\cal H$, that is, every vector in $L^2(\R^2)$ can be
uniquely expressed in terms of this vector system, which is {\it
complete}, since 0 is the only vector orthogonal to all its
elements.

The main goal of this note is to investigate spectral properties of
certain non self-adjoint operators which are expressed as quadratic
combinations of bosonic operators.
\section{Non self-adjoint Hamiltonian
constructed in terms of $su(1,1)$ generators}

A linear operator acting on a finite dimensional Hilbert space which
is non self-adjoint and has distinct real eigenvalues, is similar to
its adjoint operator. Concretely, the following holds, having in
mind that the spectrum of a finite matrix is the set of its
eigenvalues. \color{black}
\begin{theorem}\label{theorem} Let $H$ be
an  $n\times n$ complex non self-adjoint matrix with  distinct real
eigenvalues. Then $H$ and $H^*$ have a common spectrum, a complete
systems of eigenvectors and they are similar, that is, there exists
a unitary matrix $S$ such that $H^* =SHS^{-1}.$ Moreover, if
$H\phi_i=\lambda_i\phi$, $H^*\psi_i=\lambda_i\psi_i$,
$i=1,\ldots,n,$ then the eigenvectors may be normalized so that
$\langle\psi_i,\phi_j\rangle=\delta_{ij}
.$\end{theorem}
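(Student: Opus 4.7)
My plan is to derive all the stated claims — common spectrum, complete biorthonormal eigenbases, and similarity — by constructing the biorthogonal pair of eigenbases directly. Because $H$ is $n\times n$ with $n$ distinct real eigenvalues $\lambda_1,\dots,\lambda_n$, the eigenvalues of $H^*$ are their complex conjugates, which coincide with the $\lambda_i$; so $H$ and $H^*$ share the same simple real spectrum and each is diagonalizable. I would fix eigenvectors $H\phi_i=\lambda_i\phi_i$ and $H^*\psi_i=\lambda_i\psi_i$, which immediately settles the common-spectrum claim.

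Next I would establish the biorthogonality $\langle\psi_i,\phi_j\rangle=\delta_{ij}$. Starting from $\langle\psi_i,H\phi_j\rangle=\langle H^*\psi_i,\phi_j\rangle$ and expanding both sides with the eigenvalue equations yields $(\lambda_j-\lambda_i)\langle\psi_i,\phi_j\rangle=0$, so the off-diagonal inner products vanish. If some $\langle\psi_i,\phi_i\rangle$ were $0$, then $\psi_i$ would be orthogonal to every $\phi_j$, hence to a basis of $\C^n$, forcing $\psi_i=0$ — a contradiction; so the diagonal entries are nonzero and, after rescaling, I may take them equal to $1$. Completeness of each of the two eigensystems then follows automatically, since each consists of $n$ linearly independent vectors.

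For the similarity, the simplicity of the spectrum forces any invertible $S$ obeying $SH=H^*S$ to send the one-dimensional $\lambda_i$-eigenspace of $H$ onto the $\lambda_i$-eigenspace of $H^*$; hence $S\phi_i=c_i\psi_i$ for some scalars $c_i\neq 0$. Collecting the eigenvectors as columns of $\Phi$ and $\Psi$, biorthonormality reads $\Psi^*\Phi=I$, i.e.\ $\Psi=(\Phi^*)^{-1}$, and every such intertwiner has the matrix form $S=\Psi\,\diag(c_i)\,\Phi^{-1}$. Any nonzero choice of the $c_i$ produces $SHS^{-1}=H^*$.

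The hard step, and what I expect to be the main obstacle, is making $S$ actually unitary rather than merely invertible. The condition $S^*S=I$ reduces to the scalar system $\overline{c_i}c_j\,\langle\psi_i,\psi_j\rangle=\langle\phi_i,\phi_j\rangle$ for every pair $i,j$, together with the rescaling freedom $\phi_i\to\alpha_i\phi_i$, $\psi_i\to\bar\alpha_i^{-1}\psi_i$ that preserves biorthonormality. My strategy would be to start from the Hermitian positive-definite ``metric'' $\eta=\Psi\Psi^*=(\Phi\Phi^*)^{-1}$, which automatically satisfies $\eta H=H^*\eta$, set $\rho=\eta^{1/2}$, and verify that $h=\rho H\rho^{-1}$ is Hermitian; then pass from the orthonormal diagonalization of $h$, together with the polar decomposition of $\Phi$, to produce a unitary $S$ realizing $SHS^*=H^*$ and, via $S\phi_i=c_i\psi_i$, to fix the admissible unimodular phases of the $c_i$. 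This final passage is delicate precisely because $\eta$ itself supplies only a positive-definite, non-unitary intertwiner; extracting a genuine unitary from these additional data is where the real work of the argument must be concentrated.
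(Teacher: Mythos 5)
Your treatment of the common spectrum, the biorthogonality relation $\langle\psi_i,\phi_j\rangle=\delta_{ij}$, and the completeness of the two eigensystems is correct and essentially coincides with the paper's argument; you are in fact more careful than the paper in checking that the diagonal inner products $\langle\psi_i,\phi_i\rangle$ cannot vanish. Your construction of an invertible intertwiner $S=\Psi\,\diag(c_i)\,\Phi^{-1}$ with $SHS^{-1}=H^*$ is also correct and proves similarity. For the remaining claim the paper simply defines $S$ by $S\phi_i=\psi_i$ and asserts ``Thus, $S$ is unitary''; since neither $\{\phi_i\}$ nor $\{\psi_i\}$ is an orthonormal basis, that assertion is unsupported, so on this point your proposal is no further from a complete proof than the paper's own.

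The step you flag as ``the main obstacle'' is a genuine gap, and it cannot be closed: the unitarity claim is false as stated. Your own analysis already shows why --- the condition $S^*S=I$ forces $\overline{c_i}c_j\langle\psi_i,\psi_j\rangle=\langle\phi_i,\phi_j\rangle$ for all $i,j$, which overdetermines the $n$ scalars $c_i$, and the metric $\eta=\Psi\Psi^*$ supplies only a positive-definite, not a unitary, intertwiner. A concrete counterexample is
$$H=\left[\begin{matrix}1&1&0\\0&2&2\\0&0&3\end{matrix}\right],$$
which is non self-adjoint with distinct real eigenvalues $1,2,3$. If a unitary $S$ satisfied $SHS^*=H^*$, then also $SH^*S^*=H$, so every word $w$ would satisfy $\mathrm{tr}\,w(H,H^*)=\mathrm{tr}\,w(H^*,H)$; but a direct computation gives $\mathrm{tr}(H^2H^{*2}HH^*)=2893$ while $\mathrm{tr}(H^{*2}H^2H^*H)=2905$. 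Hence the theorem holds only with ``unitary'' weakened to ``invertible'' (e.g. $S=\Psi\Phi^{-1}$, or the positive-definite $\eta$), which is exactly what your argument delivers and is all that the biorthogonality statement requires. You should therefore stop at the invertible similarity rather than invest further effort in the polar-decomposition step, which has no valid completion.
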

\begin{proof}Since the eigenvalues of $H$ are real, they coincide
with those of $H^*$. Let
$\sigma(H)=\sigma(H^*)=\{\lambda_1,\dots,\lambda_n\}$, and let
$\phi_i,~\psi_i,~i=1,\ldots,n$ be such that
$$H\phi_i=\lambda_i\phi_i,\quad H^*\psi_i=\lambda_i\psi_i.$$
Further, as the eigenvalues are mutually distinct, the corresponding
eigenvectors are linearly independent, and so the eigensystems
$\{\psi_i\},~\{\phi_j\}$ corresponding to the eigenvalues
$\lambda_i,~i=1,2,\ldots,n,$ are complete.


Let $S$ be defined by $\psi_i=S\phi_i.$ Thus, $S$ is unitary and
$$H^*S\phi_i=\lambda_iS\phi_i=SH\phi_i,~i=1,\ldots,n,$$
implying that $H$ and $H^*$ are unitarily similar.

Having in mind that $$\langle H\phi_i,\psi_j\rangle=\langle
\phi_i,H^*\psi_j\rangle=\lambda_i\langle
\phi_i,\psi_j\rangle=\lambda_j\langle \phi_i,\psi_j\rangle,$$the
orthonormality relation follows.
\end{proof}\\

The following question naturally arises.
 Do the above properties survive in the infinite dimensional
setting? We answer this question for the model described by the
linear operator $H:L^2(\R^2)\rightarrow L^2(\R^2)$ defined as
\begin{equation}\label{model}H=a^*a+bb^*+\beta(a^*a-bb^*)
+\gamma(a^*b^*-ab),\quad\beta,\gamma\in\R,\end{equation} which
treats a system of two interacting bosons. It is obvious that $H$ is
non self-adjoint. \color{black}

Consider the class $\cal C$ of unbound linear operators $H$ acting
on the infinite dimensional Hilbert space $\cal H$ and satisfying
the following properties:

\noindent (I) $H$ is non self-adjoint,

\noindent (II) $H$ has real eigenvalues,


\noindent (III) $H$ and $H^*$ are isospectral.

\noindent (IV) The associated eigenvectors form biorthogonal bases
of $\cal H$.

\noindent (V) There exists a unitary transformation $S$ on $\cal H$
such that $H^*=SHS^{-1}$.\\

Throughout we prove that $H$ in (\ref{model}) belongs to the class
$\cal C$, and so, in the eigenvectors bases, $H$ and $H^*$ have  a
diagonal representation.
\section{The equation of motion method for pseudo-bosonic operators}
 Let us consider
a non self-adjoint
Hamiltonian $H$ ($H\neq H^*$) which
is expressed
in terms of unbounded bosonic operators $a_1,\ldots,a_n,$
$a^*_1,\ldots,a^*_n,$ that is, linear operators acting on $\cal D$
and satisfying the CR's,
\begin{equation}\label{WHCR}[a_i,a^*_j]=\delta_{ij},~[a_i,a_j]=0,~[a^*_i,a^*_j]=0.\end{equation}
 As usual, $\delta_{ij}$ denotes  the Kroenecker symbol (which
equals 0 for $i\neq j$ and $1$ for $i=j$).

Let $\cal V$ be the linear space spanned by
$a_1,\ldots,a_n,a^*_1,\ldots,a^*_n$ and let $f\in{\cal V}$, that is,
$f$ is a linear combination of the operators
$a_1,\ldots,a_n,a^*_1,\ldots,a^*_n$. Having in mind that
$$[a_ia_j,a^*_k]=\delta_{ik}a_j+\delta_{jk}a_i,~
[a^*_ia_j,a_k]=\delta_{ik}a_j$$ and their adjoint relations, it is
clear that $[H,f]\in{\cal V}$. Obviously, the operator $[H,f]: {\cal
V}\rightarrow{\cal V}$ is a linear operator on $f$ and next we prove
that its eigenvalues are closely related with those of $H$.

\begin{theorem}
Let $H$ be a non self-adjoint operator acting on $\cal H$ with real
eigenvalues expressed in terms of boson operators satisfying
(\ref{WHCR}). Let $f\in{\cal V}$ and $\lambda\in\R$ be such that
\begin{equation}\label{EMM}[H,f]=\lambda f.\end{equation}
Suppose $\Lambda\in\R$ and $\psi\in{\cal H}$ are, respectively,  an
eigenvalue of $H$ and an associated eigenvector. If $f\psi\neq0$,
then also $\Lambda+\lambda$ is an eigenvalue of $H$ and $f\psi$ a
corresponding eigenvector. Moreover, 
there exists $0\neq g\in{\cal V}$ such that
\begin{equation*}[H,g]=-\lambda g.\end{equation*}
\end{theorem}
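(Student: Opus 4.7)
The plan is to handle the two assertions separately, both through short commutator identities.

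For the first assertion I would simply apply the Leibniz rule for commutators: from $H\psi=\Lambda\psi$ and $[H,f]=\lambda f$ one obtains
\[
H(f\psi)=fH\psi+[H,f]\psi=(\Lambda+\lambda)f\psi,
\]
so whenever $f\psi\neq 0$ this vector is an eigenvector of $H$ with eigenvalue $\Lambda+\lambda$. No further ingredient is needed.

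For the existence of a partner $g\in\mathcal{V}$ with $[H,g]=-\lambda g$, I would regard $\mathrm{ad}_H:\mathcal{V}\to\mathcal{V}$, $f\mapsto[H,f]$, as a linear endomorphism of the $2n$-dimensional space $\mathcal{V}$, as already observed just before the statement. The structural fact I would exploit is that the commutation relations \eqref{WHCR} force $[u,v]=\omega(u,v)\mathbf{1}$ for all $u,v\in\mathcal{V}$, where $\omega$ is the standard, nondegenerate symplectic form in the basis $\{a_1,\dots,a_n,a_1^*,\dots,a_n^*\}$. The Jacobi identity combined with $[H,\mathbf{1}]=0$ then yields
\[
0=[H,[u,v]]=\bigl(\omega(\mathrm{ad}_Hu,v)+\omega(u,\mathrm{ad}_Hv)\bigr)\mathbf{1},
\]
i.e., $\mathrm{ad}_H$ is skew with respect to $\omega$.

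It is classical that a linear map skew with respect to a nondegenerate symplectic form has spectrum invariant under $\lambda\mapsto -\lambda$: writing $M$ for the matrix of $\mathrm{ad}_H$ and $J$ for the matrix of $\omega$, the relation $JM+M^{T}J=0$ gives $\det(M-\lambda I)=\det(M+\lambda I)$. Hence $-\lambda$ is an eigenvalue of $\mathrm{ad}_H$, and any nonzero eigenvector supplies the desired $g\in\mathcal{V}$. The step I expect to be the main obstacle is the bookkeeping preceding the symplectic argument: verifying that $\mathrm{ad}_H$ really preserves $\mathcal{V}$ and that $[u,v]$ is a scalar multiple of $\mathbf{1}$ for $u,v\in\mathcal{V}$. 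Both properties rely on $H$ being at most quadratic in the $a_i,a_i^{*}$, an assumption implicit in the discussion before the statement; once they are in place, the $\lambda\leftrightarrow -\lambda$ pairing is automatic from the Hamiltonian/symplectic structure.
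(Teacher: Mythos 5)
Your first assertion is proved exactly as in the paper, via $H(f\psi)=[H,f]\psi+fH\psi=(\Lambda+\lambda)f\psi$. For the existence of the partner $g$ with $[H,g]=-\lambda g$, however, you take a genuinely different and self-contained route. The paper passes to the adjoint: it notes that $[H^*,f^*]=-\lambda f^*$ (which already uses $\lambda\in\R$), and then argues that the matrix $T$ representing $\mathrm{ad}_H$ on $\mathcal V$ and the matrix representing $\mathrm{ad}_{H^*}$ are transconjugates of one another with coinciding (real) spectra, so that $-\lambda$ must also be an eigenvalue of $\mathrm{ad}_H$. Your argument instead stays entirely on the $H$ side: the canonical commutation relations (\ref{WHCR}) make $[u,v]=\omega(u,v)\mathbf{1}$ with $\omega$ a nondegenerate antisymmetric form on $\mathcal V$, the Jacobi identity shows $\mathrm{ad}_H$ is infinitesimally symplectic, and the classical identity $\det(M-\lambda I)=\det(M+\lambda I)$ for $M$ satisfying $M^TJ+JM=0$ forces the spectrum of $\mathrm{ad}_H$ to be symmetric under $\lambda\mapsto-\lambda$. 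This is correct and arguably buys more: it does not invoke the reality of $\lambda$ nor any isospectrality of $H$ and $H^*$ (claims the paper leans on somewhat informally), and it would survive even for complex $\lambda$. One small imprecision: the fact that $[u,v]$ is a multiple of $\mathbf{1}$ for $u,v\in\mathcal V$ depends only on (\ref{WHCR}), not on $H$ being quadratic; only the invariance $\mathrm{ad}_H(\mathcal V)\subseteq\mathcal V$ needs the quadratic form of $H$, which is indeed the standing assumption established just before the theorem.
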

\begin{proof}Under the hypothesis, we have $H\psi=\Lambda\psi$.
Then,
$$Hf\psi=[H,f]\psi+fH\psi=(\Lambda+\lambda)f\psi.$$ Since $\Lambda$ and
$\lambda+\Lambda$ are both real, so is $\lambda$.
Since $[H^*,f^*]=-\lambda f^*$ and $H$ and $H^*$ have the same
eigenvalues, it follows that there exists $0\neq g\in{\cal V}$  such
that
$[H,g]=-\lambda g$. Observe that the operator $[H,f]: {\cal
V}\rightarrow{\cal V}$ is represented by some $2n\times 2n$ matrix
$T$, while the operator $[H^*,f]: {\cal V}\rightarrow{\cal V}$ is
represented by the transconjugate matrix $T^*$, and so the
eigenvalues of $T$ and $T^*$ coincide since they are real.
\end{proof}\\

If $\lambda>0$, $f$ is called an {\it excitation operator} ({\it
deexcitation operator} if $\lambda<0$).

The so called equation of motion method (EMM) consists in the
solution of eq. (\ref{EMM}), with the determination of the
eigenvalues of $H$ and corresponding eigenvectors.

Let $0\neq g\in{\cal V}$ be such that $[H,g]=\lambda' g$. The Jacobi
identity yields
$$[[H,f],g]+[[f,g],H]+[[g,H],f]=0.$$
Now,  $[[f,g],H]=0$, since $[f,g]$ is a multiple of the identity.
Thus,
$$(\lambda+\lambda')[f,g]=0.$$ As a consequence, either $\lambda=-\lambda'$ or
$[f,g]=0.$ If $[f,g]\neq0$, which happens only if
$\lambda=-\lambda'$,  $f$ and $g$ may be normalized so that
$[f,g]=1.$ Since $f^*\neq g$, the operators obtained by this
procedure do not describe ``ordinary" bosons. In this case, it  may
be seen that there exists $0\neq\Psi_0\in{\cal D}$ such that either
$f\Psi_0=0$ or $g\Psi_0=0$. Assume that $f\Psi_0=0$. We write
$g=f^\ddag$ and it can be checked that the vectors system
$\{f^{\ddag~ n}\Psi_0:n\geq0\}$ constitutes a complete basis of
$L^2(\R^2)$. There also exists $\Psi_0'\in{\cal D}$ such that
$g^*\Psi_0'=f^{\ddag*}\Psi_0'=0$ and the vector system $\{f^{*~
n}\Psi'_0:n\geq0\}$ is an eigenvector system for $H^*$, which
constitutes a complete basis of $L^2(\R^2)$. The bases $$\{f^{\ddag~
n}\Psi_0:n\geq0\}~~\text{ and}~~ \{f^{*~ n}\Psi'_0:n\geq0\}$$ are
biorthogonal:$$\langle f^{*n}\Psi'_0,f^{\ddag
m}\Psi_0\rangle=\langle\Psi'_0,f^nf^{\ddag
m}\Psi_0\rangle=\delta_{nm}m!\langle\Psi'_0,\Psi_0\rangle.$$ Thus,
the operators $f,f^\ddag$ are said to describe {\it pseudo-bosons}
for $H$, while the operators $f^{\ddag*},f^*$ describe pseudo-bosons
for $H^*$ (cf. Bagarello \cite{bagarello1,bagarello}).
If $g\Psi_0=0,$ with $\Psi_0\in {\cal D}$, an analogous discussion
takes place.
\section{$H$ belongs to the class $\cal C$}\label{S4} Throughout we prove
that $H$ in eq. (\ref{model}) belongs to the class $\cal C$. As
already observed, $H$ satisfies (I).

(II) $H$ has real eigenvalues.

The operator $H$ is quadratic in the bosonic operators and we may
explicitly find its eigenvalues by the equations of motion method,
as follows. We look for a linear combination of the bosonic
operators such that its commutator with $H$ satisfies the
proportionality condition,
$$[H,(x_aa^*+x_bb^*+y_aa+y_bb)]=\lambda(x_aa^*+x_bb^*+y_aa+y_bb),~\lambda\in\R.$$
This leads to the following system of linear equations in
$x_a,x_b,y_a,y_b$ and $\lambda$:
\begin{eqnarray*}(1+\beta)x_a-\gamma y_b=\lambda x_a,\\(1-\beta)x_b-\gamma y_a=\lambda
x_b,\\-(1+\beta)y_a-\gamma x_b=\lambda y_a,\\-(1-\beta)y_b-\gamma
x_a=\lambda y_b.
\end{eqnarray*}
Solving this  linear system, we readily obtain
$\lambda=\pm\beta\pm\sqrt{1+\gamma^2}.$ The vectors $(x_a,x_b,$
$y_a,y_b)^T$ and the parameters $\lambda$ are, respectively, the
eigenvectors and eigenvalues of the $4\times4$ real symmetric matrix
$$T=\left[\begin{matrix}1+\beta&0&0&-\gamma\\0&1-\beta&-\gamma&0\\0&-\gamma&-1-\beta&0\\ -\gamma&0&0&-1+\beta\end{matrix}\right].$$
The eigenvalues of $T$ and associated eigenvectors are given by

\begin{eqnarray*}&&\lambda_1=-\beta-\sqrt{1+\ga^2},\quad v_1={\cal N}({0, -1 +
\sqrt{1 + \ga^2}, \ga,
0})^T,\quad\\&&\lambda_2=\beta-\sqrt{1+\ga^2},\quad v_2={\cal N}({-1
+ \sqrt{1 + \ga^2}, 0, 0,
\ga})^T,\quad\\&&\lambda_3=-\beta+\sqrt{1+\ga^2},\quad v_3={\cal
N}({0, 1 + \sqrt{1 + \ga^2}, -\ga,
0})^T,\quad\\&&\lambda_4=\beta+\sqrt{1+\ga^2},\quad v_4 ={\cal N}(1
+ \sqrt{1 + \ga^2}, 0, 0, -\ga)^T~.\end{eqnarray*}
The determination  of the normalization constant ${\cal N}$ is
postponed and done according to future convenience.

Consider now the operators
\begin{eqnarray*}&&c={\cal N}({ (-1 +
\sqrt{1 + \ga^2})b^*+ \ga a }) ,\\&& d={\cal N}(({-1 + \sqrt{1 +
\ga^2})a^*+ \ga b}) ,\\&& d^\ddag={\cal N}({ (1 + \sqrt{1 +
\ga^2})b^*- \ga a}),
\\&&c^\ddag ={\cal N}((1 +
\sqrt{1 + \ga^2})a^*- \ga b),
\end{eqnarray*}
also defined on $\cal D$.
Notice that
$c^\ddag\neq c^*$ and $d^\ddag\neq d^*$.

Let us take $${\cal
N}=\left(2\gamma\sqrt{1+\gamma^2}\right)^{-1/2}.$$ Then,
the operators $c^\ddag,d^\ddag,c,d$, satisfy the CR's of a
Weil-Heisenberg algebra,
$$[c,c^\ddag]=[d,d^\ddag]=\1,$$vanishing all the remaining commutators between them.
Moreover, $H$ can be written as
$$H=
\beta(c^\ddag c-d^\ddag d)+\sqrt{1+\gamma^2}(c^\ddag c+dd^\ddag).$$
Next, we show that the Hamiltonian presents a diagonal form when it
is expressed in terms of the operators $c^\ddag,d^\ddag,c,d$.

Having in mind that $$\Phi_0=\exp\left(-(x^2+y^2)\right)\in{\cal
D}$$ satisfies $a\Phi_0=b\Phi_0=0$, it may be easily verified that
$$\Psi_0=\exp(-\alpha a^*b^*)\Phi_0=\exp\left(-{1+\alpha^2\over1-\alpha^2}(x^2+y^2)-{4\alpha\over1-\alpha^2}xy\right),$$
where
$\alpha=\gamma/(1+\sqrt{1+\gamma^2})=(-1+\sqrt{1+\gamma^2})/\gamma$,
satisfies $$c\Psi_0=d\Psi_0=0.$$ Of course, $\Psi_0\in {\rm span}
{\cal D}$. Let
$$\{\Psi_{m,n}=c^{\ddag m}d^{\ddag n}\Psi_0:~m,n\geq0\}.$$
From the CR's of a Weil-Heisenberg algebra, it follows that
$${H}\Psi_{m,n}=\left(\sqrt{1+\gamma^2}+ m(\beta+\sqrt{1+\gamma^2})+
n(-\beta+\sqrt{1+\gamma^2})\right)\Psi_{m,n}.$$ Henceforth, the
eigenvalues 
of $H$ are given  by \begin{equation}\label{eigenvalues}E_{m,n}=
\sqrt{1+\gamma^2}+ m(\beta+\sqrt{1+\gamma^2})+
n(-\beta+\sqrt{1+\gamma^2}).\end{equation}Thus, (II) holds.
Moreover, the associated eigenvectors to $E_{mn}$ are
$$\Psi_{m,n}=c^{\ddag m}d^{\ddag n}\Psi_0,~m,n\geq0.$$
These eigenvectors constitute a basis of $\cal H$. Further, it can
be shown that this basis is complete \cite{bebiano}.

In order to prove that $H^*$ and $H$ have the same eigenvalues, let
us consider the state vector
$$\Psi'_0=\exp(\alpha a^*b^*)\Phi_0.$$
Clearly, $$c^{\ddag*}\Psi'_0=d^{\ddag*}\Psi'_0=0.$$ It follows that
the state vectors
$$\Psi'_{m,n}=c^{* m}d^{* n}\Psi'_0,~m,n\geq0,$$
are eigenvectors of $H^*$,
$${H^*}\Psi'_{m,n}=\left(\sqrt{1+\gamma^2}+
m(\beta+\sqrt{1+\gamma^2})+
n(-\beta+\sqrt{1+\gamma^2})\right)\Psi'_{m,n}.$$ Hence, $H^*$ and
$H$ are isospectral and (III) holds.

Next, we show that the basis constituted by the eigenvectors of $H$
is orthogonal to the basis of the eigenvectors of $H^*$. The basis
$$\{\Psi'_{m,n}=c^{*m}d^{*n}\Psi'_0: m,n\geq0\},$$ is orthogonal to
the basis $\{\Psi_{m,n}: m,n\geq0\}$, as we have
\begin{eqnarray*}&&\langle\Psi_{m,n},\Psi'_{p,q}\rangle=\langle
c^{\ddag m}d^{\ddag n}\Psi_0,c^{*p}d^{*q}\Psi'_0\rangle=\langle
c^{p}d^{q} c^{\ddag m}d^{\ddag
n}\Psi_0,\Psi'_0\rangle=m!n!\delta_{mp}\delta_{n,q}\langle\Psi_0,\Psi'_0\rangle.\end{eqnarray*}\color{black}
Henceforth, (IV) is satisfied.

Finally, (V) holds. The unitary transformation
$$S=\exp\left({-i{\pi\over2}(a^*a+b^*b)}\right),$$
acts on $a$ and $b$ as follows:
$$ a\rightarrow ia=SaS^{-1}, \quad
b\rightarrow ib=SbS^{-1},$$ and so $S$ takes $H$ into $H^*$, that
is, $$H^*=SHS^{-1}.$$
Hence, $H$ satisfies (V).\\

Some considerations are in order.

1. In the above considered  eigenvectors system, the matrix of $H$
has the following representation
\begin{eqnarray*}&&\left[\begin{matrix}\rho&0&0&0&\ldots\\0&\beta+2\rho&0&0&\ldots\\0&0&2\beta+3\rho&0&\ldots\\0&0&0&3\beta+4\rho&\ldots
\\\vdots&\vdots&\vdots&\vdots&\ddots\end{matrix}\right]\otimes I\\&&+ I\otimes
\left[\begin{matrix}0&0&0&0&\ldots\\0&-\beta+\rho&0&0&\ldots\\0&0&2(-\beta+\rho)&0&\ldots\\0&0&0&3(-\beta+\rho)&\ldots
\\\vdots&\vdots&\vdots&\vdots&\ddots\end{matrix}\right],\end{eqnarray*}
where $I$ is the semi-infinite identity matrix,
$\rho=\sqrt{1+\gamma^2}$ and $\otimes$ denotes the usual Kroenecker
product. Hence, the matrix of $H$ is the following diagonal matrix
\begin{eqnarray*}
\bigoplus_ {k=0}^\infty
\left[\begin{matrix}k(\beta+\rho)+\rho&0&0&\ldots
\\0&k(\beta+\rho)-\beta+2\rho&0&\ldots
\\0&0&k(\beta+\rho)-2\beta+3\rho&\ldots
\\\vdots&\vdots&\vdots&\ddots\end{matrix}\right].\end{eqnarray*}

2. In the consstructed eigenvectors bases, $H$ and $H^*$ have  a
diagonal representation.

3. The integers $m,n$ are the number of pseudo-bosons of type
$c^\ddag,d^\ddag$, respectively. Since, however, $c^\ddag\neq
c^*,~d^\ddag\neq d^*$, they do not describe ``ordinary" dynamical
bosons and they are said to describe dynamical pseudo-bosons
\cite{bagarello1,bagarello}.

Remark that, although  $H$ is non self-adjoint, it has real
eigenvalues and a complete system of eigenvectors for any
$\gamma,\beta\in\R$. If we replace $\gamma\in \R$ by $i\lambda$, $H$
becomes self-adjoint, but then it only has real eigenvalues and a
complete system of eigenvectors if $|\gamma|<1.$ The existence of
complex eigenvalues for $|\ga|>1$ indicates that the system under
consideration is not stable, or that $H$ is not bounded from below,
violating conditions which, in general, are required on physical
grounds.
\section{Invariant subspaces}\label{S6}
In this Section, we provide an alternative approach to determine the
eigenvalues and eigenvectors of $H$.

\color{black} We notice
that the operator $H$ is expressed as a linear combination of the
generators of the $su(1,1)$ algebra
$$a^*a+bb^*,~ab,~a^*b^*,$$
and of the operator
$$a^*a-b^*b,$$
called, for simplicity, {\it Casimir operator}, since it is closely
related to the actual Casimir operator which is given by
$$C:=(a^*a+bb^*)^2-2a^*b^*ab-2aba^*b^*=(a^*a-b^*b-\1)(a^*a-b^*b+\1).$$
This is the key fact for the alternative procedure for the spectral
analysis in this Section.

Since any $f\in L^2(\R^2)$ may be expanded in the basis
$\{\Phi_{mn}:m,n\geq0\}$, we may identify $L^2(\R^2)$ with ${\cal
H}={\rm span}\{\Phi_{mn}:m,n\geq0\}$. It is convenient to express
${\cal H}$ as a direct sum of eigenspaces of the Casimir operator,
$${\cal H}=\bigoplus_{k=-\infty}^{+\infty}{\cal H}_k,\quad {\cal H}_k={\rm span}\{\Phi_{mn};m-n=k,m,n\geq0\}.$$
Notice that  eigenspaces of the Casimir operator are invariant
subspaces of the generators of the $su(1,1)$ algebra,  and
consequently of $H$. In ${\cal H}_k$, the generator $a^*b^*$ is
matricially represented by
$$A_+=\left[\begin{matrix}0&0&0&0&\ldots\\\sqrt{|k|+1}&0&0&0&\ldots\\0&\sqrt{2(|k|+2)}&0&0&\ldots\\0&0&\sqrt{3(|k|+3)}&0&\ldots
\\\vdots&\vdots&\vdots&\vdots&\ddots\end{matrix}\right],$$
while the generator $ab$ is represented by
$$A_-=A_+^T
,$$
and the generator $a^*a+bb^*$
is represented by
$$A_0=\left[\begin{matrix}|k|+1&0&0&0&\ldots\\0&|k|+3&0&0&\ldots\\0&0&|k|+5&0&\ldots\\0&0&0&|k|+7&\ldots
\\\vdots&\vdots&\vdots&\vdots&\ddots\end{matrix}\right].$$

The following CR's of the $su(1,1)$ algebra are satisfied
\begin{equation}\label{CR3}[A_-,A_+]=A_0,\quad [A_0,A_+]=2A_+,\quad
[A_0,A_-]=-2A_-. \end{equation}The operator $A_+$ ($A_-$) is said to
be a {\it raising (lowering)} operator. That is, if $\Phi$ is an
eigenvector  of $A_0$, so that $A_0\Phi=\Lambda\Phi$, then $A_+\Phi$
is an eigenvector associated with an upward shifted eigenvalue,
$$A_0A_+\Phi=(\Lambda+2)A_+\Phi.$$ Similarly, $A_-\Phi\neq0$ is an
eigenvector associated with a downward shifted eigenvalue,
$$A_0A_-\Phi=(\Lambda-2)A_-\Phi.$$ The spectrum of $A_0$ is bounded
from below and the eigenvector $\Phi_0=(1,0,0,\ldots)^T$ satisfies
$A_-\Phi_0=0$, being called 
a {\it lowest weight state}.  A set of eigenvectors associated with
eigenvalues of $A_0$, which 
are positive, is obtained by acting successively with $A_+$ on
$\Phi_0$. We observe that
$A_0=A_0^*$ and $A_+= A_-^*$. 
It is interesting that, in ${\cal H}_k$, the Hamiltonian $H$ is
represented by a so called {\it pseudo-Jacobi matrix}, that is, a
Jacobi matrix pre or pos multiplied by the involution matrix
$\diag(1,-1,1,-1,\ldots)$,
$$H=\left[\begin{matrix}\beta k+|k|+1&-\ga\sqrt{|k|+1}&0&0&\ldots\\
\ga\sqrt{|k|+1}&\beta k+|k|+3&-\ga\sqrt{2(|k|+2)}&0&\ldots
\\0&\ga\sqrt{2(|k|+2)}&\beta k+|k|+5&-\ga\sqrt{3(|k|+3)}
&\ldots\\0&0&\ga\sqrt{3(|k|+3)}&\beta k+|k|+7&\ldots
\\\vdots&\vdots&\vdots&\vdots&\ddots\end{matrix}\right].$$
{For simplicity, this matrix has been denoted by the same symbol
$H$. It is clear that $H$ is unitarily similar to its transpose
$H^T$,
$$H^T=\e^{i{(\pi/2)}A_0}H\e^{-i{(\pi/2)}A_0}.$$} The diagonalization of this matrix
provides an
alternative procedure to the previous approach.

\color{red}For completeness, it may be in order to observer that it
is also possible to represent the generator $ab$ by
$$A'_+=
\left[\begin{matrix}0&\sqrt{|k|+1}&0&0&\ldots\\0&0&\sqrt{2(|k|+2)}&0&\ldots\\0&0&0&\sqrt{3(|k|+3)}&\ldots\\0&0&0&0&\ldots
\\\vdots&\vdots&\vdots&\vdots&\ddots\end{matrix}\right]
,$$ the generator $a^*b^*$ by
$$A'_-={A'_+}^T
,$$
 and the generator $-a^*a-bb^*$   by
$$A'_0=\left[\begin{matrix}-|k|-1&0&0&0&\ldots\\0&-|k|-3&0&0&\ldots\\0&0&-|k|-5&0&\ldots\\0&0&0&-|k|-7&\ldots
\\\vdots&\vdots&\vdots&\vdots&\ddots\end{matrix}\right].$$The commutation relations $[A'_-,A'_+]=A_0,\quad [A'_0,A'_+]=2A'_+,\quad
[A'_0,A'_-]=-2A'_-,$ which characterize the $su(1,1)$ algebra, are
clearly satisfied. However, in this case, the role of the lowest
weight state is played by a {\it highest weight state}, such that
$A_+\Phi_0=0.$\color{black}

Let ${\cal V}$ be the linear space spanned by $A_0,A_+,A_-$ and let
$$H_0=H-\beta k I=A_0+\ga (A_+-A_-).$$Obviously, $H_0\in \cal V.$
Let $$[H_0,{\cal V}]:=\{H_0x-xH_0:x\in {\cal V}\}.$$ Consider the
linear operator $[H_0,{\cal V}] : {\cal V}\rightarrow{\cal V}$. We
look for $xA_++yA_-+zA_0$ such that
$$[H_0,(xA_++yA_-+zA_0)]=\lambda (xA_++yA_-+zA_0).$$
The eigenvector $xA_++yA_-+zA_0$ is easily determined by the secular
equation
$$\left[\begin{matrix}2&0&-2\ga\\0&-2&-2\ga\\-\ga&-\ga&0\end{matrix}\right]\left[\begin{matrix}x\\y\\z\end{matrix}\right]
=\lambda\left[\begin{matrix}x\\y\\z\end{matrix}\right].$$ The
eigenvalues and respective eigenvectors of this $3\times3$ matrix
are given by
$$2\sqrt{1 + \ga^2},\quad\left({1 + \sqrt{1 + \ga^2}\over\ga}, {\ga\over1 + \sqrt{1 +
\ga^2}} , -1\right)^T,$$
$$-2\sqrt{1 + \ga^2},\quad\left({-1 + \sqrt{1 + \ga^2}\over\ga}, {\ga\over-1 + \sqrt{1 +
\ga^2}} , 1\right)^T,$$ and, of course, $0,\quad (\ga,-\ga,1)^T.$
Let us consider now the
matrices\begin{eqnarray*}&&B_+={\ga\over2\sqrt{1+\ga^2}}\left({1 +
\sqrt{1 + \ga^2}\over\ga}A_++ {\ga\over1 + \sqrt{1 + \ga^2}} A_-
-A_0\right)\\&&B_-={\ga\over2\sqrt{1+\ga^2}}\left({-1 + \sqrt{1 +
\ga^2}\over\ga}A_++{\ga\over-1 + \sqrt{1 + \ga^2}} A_-+
A_0\right),\\&&B_0={1\over\sqrt{1+\ga^2}}\left(A_0+\ga(A_+-A_-)\right).\end{eqnarray*}
The following CR's of the $su(1,1)$ algebra  are satisfied
$$[B_-,B_+]=B_0,\quad [B_0,B_+]=2B_+,\quad [B_0,B_-]=-2B_-,$$
suggesting that
\begin{equation}\label{sigmaH0}\sigma(H_0)=\sqrt{1+\ga^2}\{|k|+1,|k|+3,|k|+5,\ldots\},
\end{equation}in agreement with the previous results in Section \ref{S4}. We remark that
$B_0\neq B_0^*,~B_+\neq B_-^*$.  The operator $B_+$ ($B_-$) is said
to be a raising (lowering) operator. That is, if $\Psi$ is an
eigenvector  of $B_0$, so that $B_0\Psi=\Lambda\Psi$, then $B_+\Psi$
is an eigenvector associated with an upward shifted eigenvalue,
$B_0B_+\Psi=(\Lambda+2)B_+\Psi$. Similarly, $B_-\Psi\neq0$ is an
eigenvector associated with a downward shifted eigenvalue,
$B_0B_-\Psi=(\Lambda-2)B_-\Psi$. Next, we observe that
$$\Psi_0=\left(1,~-\alpha(|k|+1)^{1/2},~\alpha^2\left(\begin{matrix}|k|+2\\2\end{matrix}\right)^{1/2},~
-\alpha^3\left(\begin{matrix}|k|+3\\3\end{matrix}\right)^{1/2},~\ldots\right)^T$$
satisfies $B_-\Psi_0=0,$ as may be easily checked.
It follows that  the spectrum of $B_0$ is bounded from below. A
laborious but straightforward computation shows that the Casimir
operator reduces to
$$C:=B_0^2-2(B_+B_-+B_-B_+)=B_0^2-2 B_0-4B_+B_-=(k^2-1)I.$$Thus,
$$C\Psi_0=(B_0^2-2 B_0)\Psi_0=(k^2-1)\Psi_0,$$ implying that
$$B_0\Psi_0=(|k|+1)\Psi_0,$$ so  that
$$\sigma(B_0)=\{|k|+1,|k|+3,|k|+5,\ldots\}.$$
This confirms (\ref{sigmaH0}).
\section{Discussion} In Section 2, the diagonalization of
non-Hermitian Hamiltonians with real eigenvalues, which are
expressed as quadratic combinations of bosonic operators, is briefly
analyzed. It is shown that, quite generally, Hamiltonians of this
class are diagonalizable in terms of dynamical pseudo-bosons, which
are determined by the EMM. In Section 3, a non self-adjoint
Hamiltonian which is expressed as a linear combination of $su(1,1)$
generators, is investigated. Its eigenvalues and eigenvectors have
been determined with the help of a matrix $T$ of size 4 that is real
Hermitian and is determined by the EMM. The investigated Hamiltonian
has a complete system of eigenvectors expressed in terms of the
creation and annihilation operators of pseudo-bosons, and is
orthogonal to the system of eigenvectors of the adjoint Hamiltonian.
Complete vector systems constructed in terms of boson creation
operators acting on the associated vacuum state are obtained.
Infinite matrix representations of the Hamiltonian in such systems
are presented.


It would be interesting to consider more general Hamiltonians of the
investigated type. A challenging problem would be to analyze the
existence of infinite dimensional versions in the spirit of Theorem
\ref{theorem}

\end{document}